\def\ack{\section*{Acknowledgements}
  \addtocontents{toc}{\protect\vspace{6pt}}
  \addcontentsline{toc}{section}{Acknowledgements}
}
\def\abs#1{{|\,#1\,|}}
\def\set#1{{\{\,#1\,\}}}
\def\factor#1#2#3{{#1[#2\,..\,#3]}}
\newtheorem{lemma}{Lemma}
\newtheorem{theorem}[lemma]{Theorem}
\newtheorem{proposition}[lemma]{Proposition}
\begin{document}
\title{Multi-Shift de~Bruijn Sequence} 
\author{Zhi Xu\footnote{Part of the work was done during the author's stay at University of Waterloo}\\
The University of Western Ontario,\\
Department of Computer Science,\\
London, Ontario, Canada N6A 5B7\\
{\tt zhi\char`\_xu@csd.uwo.ca}}
\date{}
\maketitle

\begin{abstract}
A (non-circular) de~Bruijn sequence $w$ of order $n$ is a word such
that every word of length $n$ appears exactly once in $w$ as a
factor. In this paper, we generalize the concept to a multi-shift
setting: a $m$-shift de~Bruijn sequence of order $n$ is a word such
that every word of length $n$ appears exactly once in $w$ as a
factor that starts at an index $im+1$ for some integer $i\geq0$. We
show the number of the $m$-shift de~Bruijn sequences of order $n$ is
$a^n!a^{(m-n)(a^n-1)}$ for $1\leq n\leq m$ and is $(a^m!)^{a^{n-m}}$
for $1\leq m\leq n$, where $a$ is the size of the alphabet. We
provide two algorithms for generating a multi-shift de~Bruijn
sequence. The multi-shift de~Bruijn sequence is important in solving
the Frobenius problem in a free monoid.
\end{abstract}

\section{Introduction}
If a word $w$ can be written as $w=xyz$, then words $x$, $y$, and
$z$ are called the prefix, factor, and suffix of $w$, respectively.
A word $w$ over $\Sigma$ is called a de~Bruijn sequence of order
$n$, if each word in $\Sigma^n$ appears exactly once in $w$ as a
factor. For example, $\tt00110$ is a binary de~Bruijn sequence of
order $2$ since each binary word of length two appears in it exactly
once as a factor: $\tt00110=(00)110=0(01)10=00(11)0=001(10)$.
The de~Bruijn sequence can be understood by the following game.
Suppose there are infinite supplies of balls, each of which is
labeled by a letter in $\Sigma$, and there is a glass pipe that can
hold balls in a vertical line. On the top of that pipe is an
opening, through which one can drop balls into that pipe, and on the
bottom is a trap-door, which can support the weight of at most $n$
balls. When there are more than $n$ balls in the pipe, the trap-door
opens and those balls at the bottom drop off until only $n$ balls
remain. If we put balls as numbered as in a de~Bruijn sequence on
the alphabet $\Sigma$ of order $n$, then every $n$ ball sequence
will appear exactly once in the pipe. It is easy to see that a
de-Bruijn sequence of order $n$, if exists, is of length
$\abs{\Sigma}^n+n-1$ and its suffix of length $n-1$ is identical to
its prefix of length $n-1$. So, sometimes a de-Bruijn sequence is
written in a circular form by omitting the last $n-1$ letters, which
can be viewed as the equivalence class of words under the conjugate
relation.

The de~Bruijn sequence is also called the de~Bruijn-Good sequence,
named after de~Bruijn~\cite{deBruijn:1946} and Good~\cite{Good:1946}
who independently studied the existence of such words over binary
alphabet; the former also provided a formula $2^{2^n}$ for the total
number of those words of order $n$. The study of the de~Bruijn
sequence, however, dates back at least to 1894, when
Flye~Sainte-Marie~\cite{FlyeSainteMarie:1894} studied the words and
provided the same formula $2^{2^n}$. For an arbitrary alphabet
$\Sigma$, van~Aardenne-Ehrenfest and
de~Bruijn~\cite{vanAardenne-Ehrenfest&deBruijn:1951} provided the
formula $(\abs{\Sigma}!)^{\abs{\Sigma}^n}$ for the total number of
de~Bruijn sequences of order $n$. Besides the total number of
de~Bruijn sequences, another interesting topic is how to generate a
de~Bruijn sequence (arbitrary one, lexicographically least one,
lexicographically largest one). For generating de~Bruijn sequences,
see the surveys~\cite{Fredricksen:1982,Ralston:1982}. The de~Bruijn
sequence is some times called the full
cycle~\cite{Fredricksen:1982}, and has connections to the following
concepts: feedback shift registers~\cite{Golomb:1967}, normal
words~\cite{Good:1946}, generating random binary
sequences~\cite{Knuth1969}, primitive polynomials over a Galois
field~\cite{Rees:1946}, Lyndon words and
necklaces~\cite{Fredricksen&Kessler1977}, Euler tours and spanning
trees~\cite{vanAardenne-Ehrenfest&deBruijn:1951}.

In this paper, we consider a generalization of the de~Bruijn
sequence. To understand the concept, let us return to the glass pipe
game presented at the beginning. Now the trap-door can support more
weight. When there are $n+m$ or more balls in the pipe, the
trap-door opens and the balls drop off until there are only $n$
balls in the pipe. Is there an arrangement of putting the balls such
that every $n$ ball sequence appears exactly once in the pipe? The
answer is ``Yes'' for arbitrary positive integers $m,n$. The
solution represents a multi-shift de~Bruijn sequence. We will
discuss the existence of the multi-shift de~Bruijn sequence, the
total number of multi-shift de~Bruijn sequences, generating a
multi-shift de~Bruijn sequence, and the application of the
multi-shift de~Bruijn sequence in the Frobenius problem in a free
monoid.

\section{Multi-Shift Generalization of the de~Bruijn Sequence}
Let $\Sigma\subseteq\set{\tt0,1,\ldots}$ be the \emph{alphabet} and
let $w=a_1a_2\cdots a_n$ be a word over $\Sigma$. The \emph{length}
of $w$ is denoted by $\abs{w}=n$ and the \emph{factor} $a_i\cdots
a_j$ of $w$ is denoted by $\factor{w}{i}{j}$. If
$u=\factor{w}{im+1}{im+n}$ for some non-negative integer $i$, we say
factor $u$ appears in $w$ at \emph{a modulo $m$ position}. The set
of all words of length $n$ is denoted by $\Sigma^n$ and the set of
all finite words is denoted by
$\Sigma^*=\set{\epsilon}\cup\Sigma\cup\Sigma^2\cdots$, where
$\epsilon$ is the \emph{empty word}. The concatenation of two words
$u,v$ is denoted by $u\cdot v$, or simply $uv$.

A word $w$ over $\Sigma$ is called a \emph{multi-shift de~Bruijn
sequence} of shift $m$ and order $n$, if each word in $\Sigma^n$
appears exactly once in $w$ as a factor at a modulo $m$ position.
For example, one of the $2$-shift de~Bruijn sequence of order $3$ is
$\tt00010011100110110$, which can be verified as follows:
\begin{align*}
  \tt00010011100110110=(000)10011100110110=00(010)011100110110&\\
  \tt=0001(001)1100110110=000100(111)00110110=00010011(100)110110&\\
  \tt=0001001110(011)0110=000100111001(101)10=00010011100110(110)&.
\end{align*}
The multi-shift de~Bruijn sequence generalizes the de~Bruijn
sequence in the sense de~Bruijn sequences are exactly $1$-shift
de~Bruijn sequences of the same order. It is easy to see that the
length of each $m$-shift de~Bruijn sequence of order $n$, if exists,
is equal to $m\abs{\Sigma}^n+(n-m)$. By the definition of
multi-shift de~Bruijn sequence, the following proposition holds.

\begin{proposition}\label{lemma:circular}
Let $w$ be one $m$-shift de~Bruijn sequence $w$ of order $n$, $n>m$.
Then the suffix of length $n-m$ of $w$ is identical to the prefix of
length $n-m$ of $w$.
\end{proposition}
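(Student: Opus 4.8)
\medskip
\noindent\textit{Proof proposal.}\quad The plan is to recast the proposition as the assertion that a certain Eulerian trail is closed, and then to prove that by a short local counting argument. Write $a=\abs{\Sigma}$, so that $\abs{w}=ma^n+(n-m)=:L$, and list the factors of $w$ at modulo $m$ positions as $u_i=\factor{w}{im+1}{im+n}$. A factor starting at index $im+1$ fits inside $w$ exactly when $i\le a^n-1$, so there are precisely $a^n$ such factors; by hypothesis $u_0,u_1,\ldots,u_{a^n-1}$ is an enumeration of $\Sigma^n$ without repetition, and $u_{a^n-1}$ ends exactly at index $L$. For $0\le i\le a^n$ set $v_i=\factor{w}{im+1}{im+n-m}$, a word of length $n-m$ (here the hypothesis $n>m$ is used). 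Then $v_0$ is the length-$(n-m)$ prefix of $w$ and $v_{a^n}$ is the length-$(n-m)$ suffix of $w$, so the proposition says exactly that $v_0=v_{a^n}$.

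The key local observation I would record first is that, because $u_{i+1}$ is $u_i$ shifted to the right by $m$, the word $v_i$ is the length-$(n-m)$ prefix of $u_i$ (for $0\le i\le a^n-1$) and the word $v_{i+1}$ is the length-$(n-m)$ suffix of $u_i$ (again for $0\le i\le a^n-1$). In graph language: form the directed multigraph on vertex set $\Sigma^{n-m}$ having, for each $u\in\Sigma^n$, one edge from the length-$(n-m)$ prefix of $u$ to the length-$(n-m)$ suffix of $u$; then $u_0,u_1,\ldots,u_{a^n-1}$ traces an Eulerian trail from $v_0$ to $v_{a^n}$. Since a prescribed word of length $n-m$ is the prefix of exactly $a^m$ words of $\Sigma^n$ and the suffix of exactly $a^m$ words of $\Sigma^n$, every vertex is balanced, with in-degree and out-degree both equal to $a^m$, and hence the trail must be closed.

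I would actually write the argument as a direct count of occurrences, which also sidesteps any discussion of connectivity of that graph. Fix $v\in\Sigma^{n-m}$. Because $u_0,\ldots,u_{a^n-1}$ runs over all of $\Sigma^n$, exactly $a^m$ of these words have length-$(n-m)$ prefix $v$ and exactly $a^m$ have length-$(n-m)$ suffix $v$. By the local observation the first count equals the number of $i$ with $0\le i\le a^n-1$ and $v_i=v$, while the second equals the number of $i$ with $1\le i\le a^n$ and $v_i=v$. Equating the two and cancelling the indices $1\le i\le a^n-1$, which are counted on both sides, leaves the statement that $v_0=v$ if and only if $v_{a^n}=v$, for every $v\in\Sigma^{n-m}$. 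Taking $v=v_0$ forces $v_{a^n}=v_0$, which is the proposition.

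I do not anticipate a genuine obstacle. The only points demanding care are the index bookkeeping at the two ends, namely confirming that $u_{a^n-1}$ indeed ends at position $L$ so that $v_{a^n}$ is the true suffix of $w$ (and $v_0$ the true prefix), and the observation that no connectivity hypothesis on the auxiliary graph is needed because the counting step is entirely local. Should one prefer the Eulerian-circuit phrasing, the single extra ingredient is the standard fact that in a balanced directed multigraph every Eulerian trail is a circuit; but that fact is proved by precisely the count just described, so nothing is really saved by invoking it.
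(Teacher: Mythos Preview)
Your argument is correct and tidier than the paper's. Both proofs rest on the same telescoping device: the multiset $\{v_0,\ldots,v_{a^n-1}\}$ of length-$(n-m)$ prefixes of the $u_i$ coincides with the multiset $\{v_1,\ldots,v_{a^n}\}$ of their length-$(n-m)$ suffixes, forcing $v_0=v_{a^n}$. You carry this out in one stroke with the full words $v_i\in\Sigma^{n-m}$, whereas the paper writes $n=km+r$ with $0<r\le m$ and applies the same multiset comparison separately to successive blocks of length $m$ and to a final block of length $r$, reassembling the pieces afterwards. Your version avoids that case split and block decomposition and, as you observe, exposes the Eulerian interpretation directly: the $u_i$ trace an Euler trail in the balanced word graph $G(m,n-m)$, hence a closed one. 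The paper's block-by-block route buys nothing extra here; your single-step count is simply the cleaner execution of the same idea.
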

\begin{proof}
Let $w$ be one $m$-shift de~Bruijn sequence $w$ of order $n$ over
$\Sigma$ and let $a=\abs{\Sigma}$. Write $n=km+r$ such that $0<r\leq
m$. If $k=1$, then we compare the set of all factors
$\factor{w}{(i+1)m+1}{(i+1)m+r}$ and the set of all factors
$\factor{w}{im+1}{im+r}$ for $0\leq i\leq a^n-1$. The former covers
factors $\factor{u}{m+1}{m+r}$ and the latter covers factors
$\factor{u}{1}{r}$ for every $u\in\Sigma^n$. Since the two are
identical, we have $\factor{w}{a^nm+1}{a^nm+r}=\factor{w}{1}{r}$.
Now we assume $k\geq2$. Consider the set of all factors
$\factor{w}{(i+j+1)m+1}{(i+j+2)m}$ and the set of all factors
$\factor{w}{(i+j)m+1}{(i+j+1)m}$ for $0\leq i\leq a^n-1$ and $0\leq
j<k$. By the same argument, we have
$\factor{w}{(a^n+j)m+1}{(a^n+j+1)m}=\factor{w}{jm+1}{(j+1)m}$ for
$0\leq j<k$. Finally, comparing the set of all
$\factor{w}{(i+k)m+1}{(i+k)m+r}$ and the set of all
$\factor{w}{(i+k-1)m+1}{(i+k-1)m+r}$ for $0\leq i\leq a^n-1$, we
have the equality
$\factor{w}{(a^n+k-1)m+1}{(a^n+k-1)m+r}=\factor{w}{(k-1)m+1}{(k-1)m+r}$.
Therefore, we have the equality
$\factor{w}{ma^n+1}{ma^n+n-m}=\factor{w}{a^nm+1}{(a^n+k-1)m+r}=\factor{w}{1}{(k-1)m+r}=\factor{w}{1}{n-m}$.
\end{proof}

From Proposition~\ref{lemma:circular}, we know that when $n>m$,
every multi-shift de~Bruijn sequence can be written as a circular
word and the discussion on multi-shift de~Bruijn sequences of the
two different forms are equivalent. In this paper, we discuss the
multi-shift de~Bruijn sequence in the form of ordinary words.

A \emph{(non-strict) directed graph}, or \emph{digraph} for short,
is a triple $G=(V,A,\psi)$ consisting of a set $V$ of
\emph{vertices}, a set $A$ of \emph{arcs}, and an \emph{incidence
function} $\psi:A\to V\times V$. Here we do not take the convention
$A\subseteq V\times V$, since we allow a digraph contains self-loops
and multiple arcs regarding the same pair of vertices. When
$\psi(a)=(u,v)$, we say the arc $a$ joins $u$ to $v$, where vertex
$u=tail(a)$ and vertex $v=head(a)$ are called \emph{tail} and
\emph{head}, respectively. The indegree $\delta^-(v)$ (outdegree
$\delta^+(v)$, respectively) of a vertex $v$ is the number of arcs
with $v$ being the head (the tail, respectively). A \emph{walk} in
$G$ is a sequence $a_1,a_2,\ldots,a_k$ such that
$head(a_i)=tail(a_{i+1})$ for each $1\leq i<k$. The walk is
\emph{closed}, if $head(a_k)=tail(a_0)$. Two closed walks are
regarded as identical if one is the circular shift of the other. An
\emph{Euler tour} is a closed walk that traverses each arc exactly
once. A \emph{Hamilton cycle} is a closed walk that traverses each
vertex exactly once. An \emph{(spanning) arborescence} is a digraph
with a particular vertex, called the \emph{root}, such that it
contains every vertices of $G$, its number of arcs is exactly one
less than the number of vertices, and there is exactly one walk from
the root to any other vertex. We denote the total number of Euler
tours, Hamilton cycles, and arborescence of $G$ by $\abs{G}_E$,
$\abs{G}_H$, and $\abs{G}_A$, respectively.

An \emph{(undirected) graph} is defined as a digraph such that for
any pair of vertices $v_1,v_2$, there is an arc $a$,
$\psi(a)=(v_1,v_2)$, if and only if there is a corresponding arc
$a'$, $\psi(a')=(v_2,v_1)$. In this case, we write
$\delta^-(v)=\delta^+(v)=\delta(v)$ and a spanning arborescence is
just a \emph{spanning tree}.

The arc-graph $G^*$ of $G=(V,A,\psi)$ is defined as $(A,C,\varphi)$
such that for every pair of arcs $a_1,a_2\in A$,
$head(a_1)=tail(a_2)$, there is an arc $c\in C$,
$\varphi(c)=(a_1,a_2)$ and those arcs are the only arcs in $C$.
Euler tours exist in a graph $G$ if and only if Hamilton cycles
exist in the arc-graph $G^*$.

We define the word graph $G(m,n)$ by
$(\Sigma^{n},\Sigma^{n+m},\psi)$, where $\psi(w)=(u,v)$ for
$u=\factor{w}{1}{n},v=\factor{w}{m+1}{m+n}$. Then by definition, the
following lemmas are straightforward.
\begin{lemma}\label{lemma:arcgraph}
The digraph $G(m,n)^*$ is the digraph $G(m,n+m)$.
\end{lemma}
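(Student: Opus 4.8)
The plan is to unwind both definitions and produce an explicit, incidence-preserving bijection on the arc sets. First I would dispose of the vertices: by construction the vertex set of $G(m,n)^*$ is the arc set of $G(m,n)$, namely $\Sigma^{n+m}$, which is exactly the vertex set of $G(m,n+m)$. So the two digraphs already agree on vertices, and it remains to match the arcs together with the incidence function.

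Next I would make the arcs of $G(m,n)^*$ explicit. By the definition of the arc-graph, an arc of $G(m,n)^*$ is a pair $(a_1,a_2)$ of arcs of $G(m,n)$ — that is, $a_1,a_2\in\Sigma^{n+m}$ — subject to $head(a_1)=tail(a_2)$, and this arc joins $a_1$ to $a_2$. Since $\psi(w)=(\factor{w}{1}{n},\factor{w}{m+1}{m+n})$ in $G(m,n)$, the head-tail condition reads $\factor{a_1}{m+1}{m+n}=\factor{a_2}{1}{n}$: the length-$n$ window at position $m+1$ of $a_1$ must coincide with the length-$n$ prefix of $a_2$.

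Then I would introduce the bijection with the arc set $\Sigma^{n+2m}$ of $G(m,n+m)$. For $w\in\Sigma^{n+2m}$ set $\Phi(w)=\bigl(\factor{w}{1}{n+m},\,\factor{w}{m+1}{n+2m}\bigr)$; a short index check shows both entries lie in $\Sigma^{n+m}$ and that their required overlap windows both equal $\factor{w}{m+1}{m+n}$, so $\Phi(w)$ is a legitimate arc of $G(m,n)^*$. Conversely, given an arc $(a_1,a_2)$ of $G(m,n)^*$, the word $a_1\cdot\factor{a_2}{n+1}{n+m}$ has length $n+2m$ and, using the overlap condition, is readily checked to be the unique $w$ with $\Phi(w)=(a_1,a_2)$; hence $\Phi$ is a bijection between the arc sets. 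Finally I would verify that $\Phi$ is incidence-preserving: in $G(m,n+m)$ the arc $w$ joins $\factor{w}{1}{n+m}$ to $\factor{w}{m+1}{n+2m}$, which are precisely the tail and head of the arc $\Phi(w)$ in $G(m,n)^*$. Under the canonical identification of each arc of $G(m,n)^*$ with its defining pair, this yields $G(m,n)^*=G(m,n+m)$.

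The whole argument is bookkeeping, so I do not anticipate a genuine obstacle; the only points demanding care are the index arithmetic in $\Phi$ and its inverse (keeping the overlapping length-$n$ block consistently equal to the factor $\factor{\cdot}{m+1}{m+n}$) and the observation that the arc-graph's arc set $C$ is pinned down only up to the canonical identification with the set of head-to-tail pairs, so the equality asserted in the lemma is to be read modulo that identification.
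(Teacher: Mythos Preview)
Your proof is correct and follows essentially the same approach as the paper: both identify the vertex sets as $\Sigma^{n+m}$ and exhibit the bijection between arcs of $G(m,n+m)$ and head-to-tail pairs in $G(m,n)$ via $w\mapsto(\factor{w}{1}{n+m},\factor{w}{m+1}{n+2m})$ with inverse $(a_1,a_2)\mapsto a_1\cdot\factor{a_2}{n+1}{n+m}$. Your write-up is in fact more careful than the paper's about verifying the incidence function and noting that equality holds modulo the canonical identification of $C$ with head-to-tail pairs.
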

\begin{proof}
By definition,
  $G(m,n)=(\Sigma^n,\Sigma^{n+m},\psi'),
  G(m,n+m)=(\Sigma^{n+m},\Sigma^{n+2m},\psi''),$
where $tail'(w)=\factor{w}{1}{n}, head'(w)=\factor{w}{m+1}{m+n}$,
and
  $tail''(w)=\factor{w}{1}{m+n},
  head''(w)=\factor{w}{m+1}{2m+n}.$
So for every pair of arcs $a_1,a_2\in\Sigma^{n+m}$ of $G(m,n)$ with
$head'(a_1)=tail'(a_2)$, there is an arc
$a_1\cdot\factor{a_2}{n+1}{n+m}\in\Sigma^{n+2m}$ of $G(m,n+m)$; and
for every arc $w\in\Sigma^{n+2m}$ of $G(m,n+m)$,
  $head'(tail''(w))=\factor{w}{m+1}{m+n}=tail'(head''(w)).$
Hence, by definition, $G(m,n+m)$ is the arc-graph of $G(m,n)$.
\end{proof}

\begin{lemma}\label{lemma:equivalent}
Suppose $m\leq n$. (1) There is a $\abs{\Sigma}^n$-to-$1$ mapping
from the set of $m$-shift de~Bruijn sequences of order $n$ onto the
set of Hamilton cycles in $G(m,n)$. (2) There is a
$\abs{\Sigma}^n$-to-$1$ mapping from the set of $m$-shift de~Bruijn
sequences of order $n$ onto the set of Euler tours in $G(m,n-m)$.
\end{lemma}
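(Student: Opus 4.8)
The plan is to build the map of (1) by hand and deduce (2) from it. Write $a=\abs{\Sigma}$, so a putative $m$-shift de~Bruijn sequence of order $n$ has length $L=ma^n+n-m$, while $G(m,n)$ has vertex set $\Sigma^n$ (so $a^n$ vertices) and arc set $\Sigma^{n+m}$. Given such a sequence $w$, set $v_i=\factor{w}{im+1}{im+n}$ for $0\le i\le a^n-1$; by definition these are pairwise distinct and exhaust $\Sigma^n$. For $0\le i\le a^n-2$ the length-$(n+m)$ factor $x_i=\factor{w}{im+1}{im+n+m}$ is an arc of $G(m,n)$ with $tail(x_i)=v_i$ and $head(x_i)=v_{i+1}$, so $x_0,\ldots,x_{a^n-2}$ is a walk meeting every vertex exactly once. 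To close it I need an arc from $v_{a^n-1}$ to $v_0$, which exists precisely when $\factor{v_{a^n-1}}{m+1}{n}=\factor{v_0}{1}{n-m}$; this is exactly the conclusion of Proposition~\ref{lemma:circular} (and is vacuous when $n=m$, where any two vertices of $G(m,m)$ are already joined). Since the extra arc has tail $v_{a^n-1}$, distinct from $v_0,\ldots,v_{a^n-2}$, appending it yields a Hamilton cycle, which I take to be $\phi(w)$; thus $\phi$ is well defined.

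To show $\phi$ is onto with all fibres of size $a^n$, I would invert the construction. Given a Hamilton cycle, list its arcs cyclically as $c_0,\ldots,c_{a^n-1}$ with $c_i:u_i\to u_{i+1}$ (subscripts mod $a^n$); since $n\ge m$, each arc $c_i$ forces $\factor{u_i}{m+1}{n}=\factor{u_{i+1}}{1}{n-m}$, so the requirements $\factor{w}{im+1}{im+n}=u_i$ are mutually consistent, and because the intervals $[im+1,im+n]$, $0\le i\le a^n-1$, cover $[1,L]$, they pin down a unique word $w=u_0\cdot\factor{u_1}{n-m+1}{n}\cdots\factor{u_{a^n-1}}{n-m+1}{n}$ of length $L$. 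A short induction confirms $\factor{w}{im+1}{im+n}=u_i$, so the length-$n$ factors of $w$ at modulo $m$ positions are exactly the $a^n$ distinct vertices, each once; hence $w$ is an $m$-shift de~Bruijn sequence with $\phi(w)$ the given cycle, proving surjectivity. The only choice made here is which arc is $c_0$: the $a^n$ choices produce words with pairwise distinct length-$n$ prefixes $u_0$, hence $a^n$ distinct sequences, and conversely any $w$ with $\phi(w)$ equal to the given cycle is recovered this way (taking $c_0$ to be the arc $v_0\to v_1$), so each fibre has size exactly $a^n$.

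For (2) the quickest route is to compose $\phi$ with the standard bijection between Euler tours of a digraph $H$ and Hamilton cycles of its arc-graph $H^*$: by Lemma~\ref{lemma:arcgraph} we have $G(m,n-m)^*=G(m,n)$, and an Euler tour of $G(m,n-m)$, read as a closed walk in $G(m,n)$, visits each arc of $G(m,n-m)$ — that is, each vertex of $G(m,n)$ — exactly once, and conversely. Alternatively one can repeat the argument of (1) one level down directly: reading the length-$n$ factors of $w$ at modulo $m$ positions as arcs of $G(m,n-m)$ (tail and head being the length-$(n-m)$ factors at the current and next modulo $m$ positions), Proposition~\ref{lemma:circular} closes the resulting walk into an Euler tour, and the same rotation argument gives the $a^n$-to-$1$ count.

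I expect the delicate point to be the fibre count — not that the $a^n$ rotations of a cycle give distinct sequences (clear from distinctness of the $u_i$), but that nothing outside one such family of rotations can share an image, i.e.\ that $w$ is completely determined by $\phi(w)$ together with the prefix $\factor{w}{1}{n}$. The degenerate case $n=m$, where Proposition~\ref{lemma:circular} does not apply but closing the walk is automatic, is a minor separate check.
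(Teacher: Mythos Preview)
Your proof is correct and follows essentially the same approach as the paper's: reconstruct the sequence from a Hamilton cycle together with a choice of starting arc (and vice versa), then derive (2) from (1) via Lemma~\ref{lemma:arcgraph}. You spell out details the paper leaves implicit---most notably the appeal to Proposition~\ref{lemma:circular} to close the walk, the $n=m$ degenerate case, and the explicit fibre-count argument---but the underlying construction is identical.
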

\begin{proof}
Let $l=\abs{\Sigma}^n$. (1) Notice that any Hamilton cycle
$a_1,a_2,\ldots,a_l$ together with a starting arc $a_1$ uniquely
determines one $m$-shift de~Bruijn sequences of order $n$ specified
by
  \[\factor{a_1}{1}{n}\factor{a_1}{n+1}{n+m}\factor{a_2}{n+1}{n+m}\cdots\factor{a_{l-1}}{n+1}{n+m},\]
and vice versa. So the $l$-to-$1$ mapping exists. (2) Applying
Lemma~\ref{lemma:arcgraph}, this part follows from (1).
\end{proof}

\begin{theorem}\label{theorem:existence}
For any alphabet $\Sigma$, positive integers $m,n$, the $m$-shift
de~Bruijn sequences of order $n$ over $\Sigma$ exist.
\end{theorem}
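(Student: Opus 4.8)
The plan is to split into the two regimes $m\le n$ and $m>n$ and reduce each to a known existence fact about Euler tours.

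\paragraph{Case $m\le n$.}
First I would invoke Lemma~\ref{lemma:equivalent}(2): it suffices to exhibit an Euler tour in the word graph $G(m,n-m)=(\Sigma^{n-m},\Sigma^n,\psi)$. By the classical Euler theorem for digraphs, such a tour exists as soon as the digraph is connected (on the vertices touched by arcs) and balanced, i.e.\ $\delta^-(v)=\delta^+(v)$ for every vertex $v$. Balance is immediate: for a fixed vertex $u\in\Sigma^{n-m}$, the arcs with $tail=u$ are exactly the words $u\cdot x$ with $x\in\Sigma^m$ (so $\delta^+(u)=\abs{\Sigma}^m$), and the arcs with $head=u$ are exactly the words $y\cdot u$ with $y\in\Sigma^m$ (so $\delta^-(u)=\abs{\Sigma}^m$). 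For connectivity, given $u,v\in\Sigma^{n-m}$ I would write an explicit walk from $u$ to $v$: pad $u$ on the right and slide $m$ letters at a time toward $v$ — concretely, if $n-m>0$ the walk through arcs $u v_1,\ (\factor{uv}{m+1}{})\dots$ works; if $n-m=0$ there is a single vertex $\epsilon$ and every word in $\Sigma^n$ is a self-loop, so connectivity is trivial. The only mild subtlety is that when $n-m<m$ one slide already overshoots, but then a one- or two-arc walk reaching $v$ is still easy to name; I would handle the $n-m\ge m$ and $n-m<m$ subcases with the same ``shift-and-replace'' idea. Since the digraph is connected and balanced, an Euler tour exists, hence an $m$-shift de~Bruijn sequence of order $n$ exists.

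\paragraph{Case $m>n$.}
Here $n-m<0$, so Lemma~\ref{lemma:equivalent} does not apply directly, and I would construct the sequence by hand from the definition. An $m$-shift de~Bruijn sequence of order $n$ has length $m\abs{\Sigma}^n+(n-m)$ and is determined by the requirement that the blocks $\factor{w}{im+1}{im+n}$, $0\le i\le \abs{\Sigma}^n-1$, run through all of $\Sigma^n$ exactly once, while the intervening positions are unconstrained. So I would simply fix any enumeration $u_0,u_1,\dots,u_{\abs{\Sigma}^n-1}$ of $\Sigma^n$ and set
\[
w \;=\; u_0\,\mathtt{0}^{\,m-n}\,u_1\,\mathtt{0}^{\,m-n}\,u_2\cdots u_{\abs{\Sigma}^n-2}\,\mathtt{0}^{\,m-n}\,u_{\abs{\Sigma}^n-1},
\]
padding each gap of length $m-n\ge0$ with an arbitrary letter of $\Sigma$ (say the smallest one). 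By construction $\abs{w}=\abs{\Sigma}^n\cdot n+(\abs{\Sigma}^n-1)(m-n)=m\abs{\Sigma}^n+(n-m)$, and the factor starting at position $im+1$ is exactly $u_i$, so each word of $\Sigma^n$ occurs exactly once at a modulo $m$ position. This settles the case $m>n$ outright (no graph theory needed), and in fact also reproves $m=n$.

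\paragraph{Main obstacle.}
The substantive step is the $m\le n$ case, and within it the only real work is verifying connectivity of $G(m,n-m)$ — balance and the reduction via Lemma~\ref{lemma:equivalent} are bookkeeping. I expect the connectivity argument to require a small, explicit ``sliding window'' construction and a careful treatment of the degenerate ranges $n-m=0$ and $0<n-m<m$; once connectivity is in hand, Euler's theorem does the rest. (If one prefers a uniform treatment, one can even skip the graph-theoretic route entirely and give a direct recursive construction, but the Euler-tour argument is shorter given the lemmas already established.)
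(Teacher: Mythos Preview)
Your proposal is correct and follows essentially the same two-case strategy as the paper: a direct padding construction when $m\ge n$, and the Euler-tour argument in $G(m,n-m)$ via Lemma~\ref{lemma:equivalent} when $m<n$. The only notable difference is that the paper disposes of connectivity in one line by observing that every vertex is connected (in both directions) to the single vertex ${\tt 0}^{n-m}$, which spares you the sliding-window case analysis you anticipate.
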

\begin{proof}
First we assume $m\geq n$. Let $u_1,u_2,\ldots,u_l$ be any
permutation of the words in $\Sigma^n$ for $l=\abs{\Sigma}^n$. Then
the word $u_1{\tt0}^{m-n}u_2{\tt0}^{m-n}\cdots{\tt0}^{m-n}u_l$ is
one $m$-shift de~Bruijn sequence of order $n$ over $\Sigma$.

Now we assume $m<n$ and prove there exists an Euler tour in
$G(m,n-m)$. Then by Lemma~\ref{lemma:equivalent}, the existence of
$m$-shift de~Bruijn sequences of order $n$ over $\Sigma$ is ensured.
To show the existence of an Euler tour, we only need to verify that
$G(m,n-m)$ is connected and that $\delta^-(v)=\delta^+(v)$ for every
vertex $v$, both of which are straightforward: for every vertex $v$
in $G(m,n-m)$, $v$ is connected to the vertex ${\tt0}^{n-m}$ in both
directions and $\delta^-(v)=\delta^+(v)=\abs{\Sigma}^{m}$.
\end{proof}

\section{Counting the Number of Multi-Shift de~Bruijn Sequences}
Since $m$-shift de~Bruijn sequence of order $n$ exists, in this
section we discuss the total number of different $m$-shift de~Bruijn
sequence of order $n$, and we denote the number by $\#(m,n)$. First,
we study the degenerated case.

\begin{lemma}\label{lemma:countnlesm}
For $1\leq n\leq m$, $\#(m,n)=a^n!a^{(m-n)(a^n-1)}$, where
$a=\abs{\Sigma}$.
\end{lemma}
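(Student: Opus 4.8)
The plan is to count $m$-shift de~Bruijn sequences of order $n$ directly, exploiting the fact that when $n\le m$ consecutive factors at modulo $m$ positions overlap very little (or not at all when $n<m$) and are essentially unconstrained. Recall the length of such a sequence is $ma^n+(n-m)$, and the factors occurring at the $a^n$ modulo $m$ positions are exactly $\factor{w}{im+1}{im+n}$ for $0\le i\le a^n-1$; by definition these must be a permutation of $\Sigma^n$. The key observation is the following decomposition of the positions: write $w$ as $w = b_0 g_0 b_1 g_1 \cdots b_{a^n-1} g_{a^n-1}$ where $b_i=\factor{w}{im+1}{im+n}$ is the $i$-th ``block'' (length $n$) and $g_i=\factor{w}{im+n+1}{(i+1)m}$ is the ``gap'' between block $i$ and block $i+1$ (length $m-n\ge 0$) — with the caveat that the last gap $g_{a^n-1}$ has length only $n-m$... no: since $n\le m$, after the last block at position $(a^n-1)m+1$ the word extends to index $ma^n+(n-m) = (a^n-1)m + n$, so the last block $b_{a^n-1}$ occupies exactly the final $n$ positions and there is no trailing gap. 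Thus $w = b_0 g_0 b_1 g_1 \cdots b_{a^n-2} g_{a^n-2} b_{a^n-1}$, a concatenation of $a^n$ blocks and $a^n-1$ gaps, and this parsing is a bijection between words of the correct length and tuples $(b_0,\dots,b_{a^n-1},g_0,\dots,g_{a^n-2})$ with $b_i\in\Sigma^n$ and $g_i\in\Sigma^{m-n}$.

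Under this bijection, $w$ is an $m$-shift de~Bruijn sequence of order $n$ if and only if $(b_0,\dots,b_{a^n-1})$ is a permutation of $\Sigma^n$, and the gaps $g_i$ are completely free. Hence $\#(m,n) = (\text{number of permutations of }\Sigma^n)\cdot(\text{number of choices of }a^n-1\text{ gaps}) = a^n!\cdot (a^{m-n})^{a^n-1} = a^n!\,a^{(m-n)(a^n-1)}$, which is the claimed formula. The case $n=m$ is the special instance where each gap is the empty word, giving $\#(m,m)=a^m!$, consistent with the exponent $(m-n)(a^n-1)=0$.

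The only point requiring care — the ``main obstacle,'' such as it is — is verifying the parsing bijection rigorously: that the $i$-th modulo-$m$ factor of $w$ is exactly the block $b_i$ in the decomposition, and that distinct choices of $(b_i)$ and $(g_i)$ give distinct words. This is just bookkeeping with the index arithmetic $im+1,\dots,im+n$ versus $im+n+1,\dots,(i+1)m$, together with the length check $ma^n+(n-m) = n + (a^n-1)(m-n) + (a^n-1)n$, so I would state it as a short explicit verification rather than a separate lemma. I would present the proof by: (i) recalling the length and the definition of the modulo-$m$ factors; (ii) writing the explicit parsing $w=b_0g_0\cdots b_{a^n-1}$ and checking it is a length-preserving bijection onto tuples; (iii) observing the de~Bruijn condition is precisely ``$(b_i)$ is a permutation of $\Sigma^n$'' with the $g_i$ unconstrained; and (iv) multiplying the counts.
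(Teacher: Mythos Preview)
Your proof is correct and matches the paper's approach exactly: the paper likewise observes that for $1\le n\le m$ the $m$-shift de~Bruijn sequences of order $n$ are precisely the words of the form $u_1\Sigma^{m-n}u_2\Sigma^{m-n}\cdots\Sigma^{m-n}u_l$ with $(u_1,\ldots,u_l)$ a permutation of $\Sigma^n$ and $l=a^n$, and multiplies the counts. Your version simply spells out the parsing and index arithmetic that the paper leaves implicit.
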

\begin{proof}
Let $a=\abs{\Sigma}$. By the definition of the multi-shift de~Bruijn
sequence, in the case $1\leq n\leq m$, $m$-shift de~Bruijn sequences
of order $n$ are exactly those of the form
$u_1\Sigma^{m-n}u_2\Sigma^{m-n}\cdots\Sigma^{m-n}u_l$, where $l=a^n$
and $u_1,u_2,\ldots,u_l$ is a permutation of all words in
$\Sigma^n$. Therefore, the total number of such words is
$a^n!a^{(m-n)(a^n-1)}$.
\end{proof}

To study the case $1\leq m\leq n$, we need a theorem by
van~Aardenne-Ehrenfest and
de~Bruijn~\cite{vanAardenne-Ehrenfest&deBruijn:1951}, which
describes the relation between the number of Euler tours in a
particular type of digraph and the number of Euler tours in its
arc-graph.
\begin{theorem}[van~Aardenne-Ehrenfest and de~Bruijn~\cite{vanAardenne-Ehrenfest&deBruijn:1951}]
\label{theorem:ehrenfestbruijn} Let $G=(V,A,\psi)$ be a digraph such
that $a=\delta^-(v)=\delta^+(v)$ for every $v\in V$. Then
$\abs{G^*}_E=a^{-1}(a!)^{\abs{V}(a-1)}\abs{G}_E$.
\end{theorem}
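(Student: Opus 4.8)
The plan is to route the count through spanning arborescences and then through spectra. For a connected digraph $H$ with $\delta^{-}(v)=\delta^{+}(v)$ at every vertex, let $\tau_{w}(H)$ denote the number of spanning arborescences of $H$ directed toward a root $w$; for such $H$ this number is independent of $w$. I would invoke two classical facts about such $H$. First, the BEST theorem: $\abs{H}_{E}=\tau_{w}(H)\prod_{v}(\delta^{+}(v)-1)!$. (If one prefers not to cite it, it follows from the transition-system argument: an Euler tour induces at each vertex a bijection from the incoming to the outgoing arcs; such a family of bijections conversely splits the arc set into closed walks; and, with the aid of an arborescence, one counts the families producing a single walk.) Second, the directed Matrix--Tree theorem, which identifies $\tau_{w}(H)$ with a diagonal cofactor of the Laplacian $L(H)=D^{+}(H)-A(H)$; since for Eulerian $H$ the all-ones vector is both a left and a right null vector of $L(H)$, all diagonal cofactors of $L(H)$ are equal, and their common value $\tau_{w}(H)$ therefore equals $\frac{1}{\abs{V(H)}}$ times the product of the nonzero eigenvalues of $L(H)$.

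I would then apply the BEST formula to $G$ and to its arc-graph $G^{*}$. The digraph $G^{*}$ has vertex set $A(G)$, hence $a\abs{V}$ vertices; each arc-vertex $e$ has $\delta^{+}_{G^{*}}(e)=\delta^{+}_{G}(head(e))=a$ and $\delta^{-}_{G^{*}}(e)=\delta^{-}_{G}(tail(e))=a$; and $G^{*}$ is connected because $G$ is, which is the case in the intended application $G=G(m,n-m)$. So $\abs{G}_{E}=\tau_{w}(G)\,((a-1)!)^{\abs{V}}$ and $\abs{G^{*}}_{E}=\tau_{\rho}(G^{*})\,((a-1)!)^{a\abs{V}}$ for any arc $\rho$ of $G$, and substituting these into the equation to be proved reduces it to the purely combinatorial identity $\tau_{\rho}(G^{*})=a^{\abs{V}(a-1)-1}\,\tau_{w}(G)$.

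For this identity I would use the spectral description. Both Laplacians have the shape $aI-A(\cdot)$: $L(G)=aI-A(G)$ and $L(G^{*})=aI-A(G^{*})$. If $B$ and $C$ are the tail-incidence and head-incidence matrices of $G$ (each of size $\abs{V}\times\abs{A(G)}$), then $A(G)=B\,C^{\mathrm{T}}$ whereas $A(G^{*})=C^{\mathrm{T}}B$, so the adjacency matrices of $G$ and of $G^{*}$ are the two products, in the two orders, of one fixed pair of rectangular matrices. Hence the multiset of eigenvalues of $A(G^{*})$ is that of $A(G)$ together with the eigenvalue $0$ with multiplicity $a\abs{V}-\abs{V}$, and therefore the nonzero eigenvalues of $L(G^{*})$ are those of $L(G)$ together with $a$ repeated $a\abs{V}-\abs{V}$ times. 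Multiplying, the product of the nonzero eigenvalues of $L(G^{*})$ equals $a^{\abs{V}(a-1)}$ times that of $L(G)$, so the cofactor formula gives $a\abs{V}\cdot\tau_{\rho}(G^{*})=\abs{V}\cdot a^{\abs{V}(a-1)}\cdot\tau_{w}(G)$, i.e.\ $\tau_{\rho}(G^{*})=a^{\abs{V}(a-1)-1}\,\tau_{w}(G)$, as required. (Alternatively the arborescence identity can be proved by a direct bijective or counting argument on arborescences, closer to the original van~Aardenne-Ehrenfest--de~Bruijn treatment.)

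The genuinely delicate part is not this computation --- which is routine once the inputs are available --- but assembling the inputs: one must cite or reprove the BEST theorem, and set up the directed Matrix--Tree theorem together with the observation that an Eulerian digraph has all Laplacian cofactors equal. The stray factor $a^{-1}$ in the statement, which looks surprising, is produced automatically by the normalizing ratio $\abs{V}/\abs{A(G)}=1/a$ in the formula ``cofactor $=\frac{1}{\abs{V(H)}}\prod_{\mu\neq0}\mu$''; conceptually it reflects that a fixed arc of $G$ is visited $a$ times, not once, by any Euler tour of $G^{*}$, and keeping that discrepancy straight across the two applications of the BEST formula is the one point requiring care.
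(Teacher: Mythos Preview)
The paper does not prove this theorem; it is quoted as a result of van~Aardenne-Ehrenfest and de~Bruijn and used as a black box in Lemma~\ref{lemma:recursion}. So there is no ``paper's own proof'' to compare against.

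Your argument is correct. Reducing via the BEST theorem to the arborescence identity $\tau_\rho(G^*)=a^{\abs{V}(a-1)-1}\tau_w(G)$ and then proving that identity spectrally, through the factorizations $A(G)=BC^{\mathrm T}$ and $A(G^*)=C^{\mathrm T}B$, is a clean modern route; the eigenvalue bookkeeping and the normalization by $1/\abs{V(H)}$ are handled accurately, and the stray $a^{-1}$ does fall out exactly where you say. Two small points worth making explicit: (i) you are tacitly assuming $G$ is connected so that $L(G)$ has a simple zero eigenvalue and the ``product of nonzero eigenvalues $=\abs{V}\cdot\tau_w$'' formula applies; for disconnected $G$ with $a\geq 1$ both sides of the stated identity vanish, so the theorem still holds, but your spectral computation does not literally cover that case; (ii) the phrase ``a fixed arc of $G$ is visited $a$ times by any Euler tour of $G^*$'' in your closing remark is not quite right---each arc of $G$ is a \emph{vertex} of $G^*$ and is visited exactly $a$ times because its degree in $G^*$ is $a$---though this is only an intuitive aside and does not affect the proof.

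For context, the original 1951 argument is more combinatorial: it relates arborescences of $G^*$ directly to arborescences of $G$ together with local choices at each vertex, without passing through eigenvalues. Your spectral approach is shorter once the Matrix--Tree machinery is in hand, at the cost of importing that machinery; the original is more self-contained but longer.
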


The digraph $G(m,n)$ satisfies the conditions in
Theorem~\ref{theorem:ehrenfestbruijn} with $a=\abs{\Sigma}^m$. So,
by the relation between the multi-shift de~Bruijn sequences and the
Euler tours in the word graph $G(m,n)$, we have the following
recursive expression on $\#(m,n)$.
\begin{lemma}\label{lemma:recursion}
For $m\geq1,n\geq2m$, $\#(m,n)=(a^m!)^{a^{n-m}-a^{r}}\#(m,m+r)$,
where $a=\abs{\Sigma}$, $r=n\bmod m$.
\end{lemma}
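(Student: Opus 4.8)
The plan is to derive a single one-step recursion $\#(m,k+m)=(a^{m}!)^{a^{k-m}(a^{m}-1)}\,\#(m,k)$ valid for every $k\geq m$, and then to iterate it from $k=m+r$ up to $k=n$, collapsing the resulting geometric family of factorial factors. Throughout, $a=\abs{\Sigma}$.

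\emph{Step 1 (one-step recursion).} Fix $k\geq m$. By Lemma~\ref{lemma:equivalent}(2) we have $\#(m,k)=a^{k}\abs{G(m,k-m)}_E$ and likewise $\#(m,k+m)=a^{k+m}\abs{G(m,k)}_E$; both are nonzero by Theorem~\ref{theorem:existence}. By Lemma~\ref{lemma:arcgraph}, $G(m,k)=G(m,k-m)^{*}$, and $G(m,k-m)$ has $a^{k-m}$ vertices, each with in- and out-degree $a^{m}$. Hence Theorem~\ref{theorem:ehrenfestbruijn}, applied with the degree parameter equal to $a^{m}$, gives
\[
  \abs{G(m,k)}_E=(a^{m})^{-1}(a^{m}!)^{a^{k-m}(a^{m}-1)}\abs{G(m,k-m)}_E .
\]
Substituting this into $\#(m,k+m)=a^{k+m}\abs{G(m,k)}_E$ and dividing by $\#(m,k)=a^{k}\abs{G(m,k-m)}_E$, the powers of $a$ cancel exactly ($a^{k+m}(a^{m})^{-1}=a^{k}$), which yields the claimed one-step recursion.

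\emph{Step 2 (iteration).} Write $n=qm+r$ with $r=n\bmod m$ and $0\leq r<m$; since $n\geq 2m$ we have $q\geq 2$, and the intermediate orders $m+r,\,2m+r,\,\dots,\,(q-1)m+r$ are all at least $m$, so Step~1 applies at each of them. Telescoping the $q-1$ instances gives
\[
  \#(m,n)=(a^{m}!)^{(a^{m}-1)\sum_{i=0}^{q-2}a^{r+im}}\,\#(m,m+r).
\]
Evaluating the exponent via the geometric sum $\sum_{i=0}^{q-2}a^{im}=\bigl(a^{(q-1)m}-1\bigr)/(a^{m}-1)$ gives $(a^{m}-1)\,a^{r}\bigl(a^{(q-1)m}-1\bigr)/(a^{m}-1)=a^{r+(q-1)m}-a^{r}=a^{n-m}-a^{r}$, the exponent in the statement.

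I expect no genuine difficulty here beyond bookkeeping. The two points to get exactly right are: (i) in Step~1, that the factor $a^{k+m}$ from $\#(m,k+m)$ is cancelled precisely by the $(a^{m})^{-1}$ of Theorem~\ref{theorem:ehrenfestbruijn} together with the $a^{k}$ from $\#(m,k)$, so that iterating in Step~2 introduces no stray power of $a$; and (ii) the geometric-series evaluation of the factorial exponent. The boundary sub-case $r=0$ (so that $G(m,k-m)$ for $k=m$ is the single-vertex multigraph with $a^{m}$ self-loops) needs no separate treatment, since Theorem~\ref{theorem:ehrenfestbruijn} holds verbatim when $\abs{V}=1$.
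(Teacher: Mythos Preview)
Your proof is correct and follows essentially the same approach as the paper: convert $\#(m,\cdot)$ to Euler-tour counts via Lemma~\ref{lemma:equivalent}(2), reduce the order by $m$ using Theorem~\ref{theorem:ehrenfestbruijn} (applied to $G(m,k-m)$ with degree parameter $a^{m}$), and iterate down to $\#(m,m+r)$, summing the resulting geometric series in the exponent. Your presentation is slightly more explicit in isolating the one-step recursion and in handling the $r=0$ boundary, but there is no substantive difference.
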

\begin{proof}
Let $a=\abs{\Sigma}$, $r=n\bmod m$. By Lemma~\ref{lemma:equivalent},
\begin{align*}
  \#(m,n)&=a^n\abs{G(m,n-m)}_E\\
    &=a^{n-m}(a^m!)^{a^{n-2m}(a^m-1)}\abs{G(m,n-2m)}_E\\
    &=(a^m!)^{a^{n-2m}(a^m-1)}\#(m,n-m)\\
    &=(a^m!)^{a^{n-2m}(a^m-1)}(a^m!)^{a^{n-3m}(a^m-1)}\#(m,n-2m)\\
    &=\ldots\\
    &=(a^m!)^{a^{n-2m}(a^m-1)}(a^m!)^{a^{n-3m}(a^m-1)}\cdots(a^m!)^{a^{r}(a^m-1)}\#(m,m+r)\\
    &=(a^m!)^{a^{n-m}-a^{r}}\#(m,m+r). \tag*{\qedhere}
\end{align*}
\end{proof}

To finish the last step of obtaining $\#(m,n)$ for $1\leq m\leq n$,
we again need two theorems, which are often used in the literature
to count the number of Euler tours in various types of digraphs.
\begin{theorem}[BEST
theorem~\cite{Tutte&Smith1941,vanAardenne-Ehrenfest&deBruijn:1951}]\label{theorem:best}
In a digraph $G=(V,A,\psi)$, $\abs{G}_E={\prod_{v\in
V}(\delta^+(v)-1)!}\,\abs{G}_A$.
\end{theorem}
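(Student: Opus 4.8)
The plan is to recognize this as the classical BEST theorem and prove it by the \emph{last-exit arborescence} correspondence. I would argue it in the regime in which it is actually applied (and in which the right-hand side is even well defined): $G$ connected with $\delta^-(v)=\delta^+(v)\ge1$ for every vertex $v$, which is the case for every word graph $G(m,n)$, of common degree $a^m$. Here I read $\abs{G}_A$ as the number of spanning arborescences rooted at a fixed vertex $r$; for such $G$ this count is independent of $r$ — one may invoke the matrix--tree theorem, it being the principal $(r,r)$-cofactor of the Laplacian $D-A$, all of whose cofactors agree because every row and column sum of $D-A$ vanishes — and the number of arborescences diverging from $r$ equals the number converging to $r$, since reversing all arcs transposes $D-A$ and leaves its cofactors unchanged; write $t_r(G)$ for this common value. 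Fix also an arc $e_0$ with $tail(e_0)=r$. Every Euler tour traverses $e_0$, and, being considered up to circular shift, has a unique linearization beginning with $e_0$; so it is enough to count linear Euler walks that start with $e_0$.

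For the forward map, given such a walk $W$, let $\tau(v)$ (for $v\ne r$) be the arc by which $W$ leaves $v$ for the \emph{last} time. The $\abs V-1$ arcs $\set{\tau(v):v\ne r}$ form a spanning arborescence converging to $r$: along $\tau$-arcs the time of last departure strictly increases — after $W$ takes $\tau(v)$ into some $v'\ne r$, balancedness together with the fact that $W$ ends at $r$ forces a still-later departure from $v'$ — so the $\tau$-chain from any vertex cannot cycle and, being finite, must end at the unique vertex with no outgoing $\tau$-arc, namely $r$. Recording, in addition, the order in which $W$ consumes each vertex's out-arcs, we see $\tau(v)$ is last among $v$'s out-arcs for each $v\ne r$ and $e_0$ is first at $r$. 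Conversely, from a converging arborescence $T$ at $r$, an ordering of each $v$'s out-arcs ($v\ne r$) whose last element is the $T$-arc, and an ordering of $r$'s out-arcs whose first element is $e_0$, build a walk from $r$ greedily, always leaving the current vertex by its earliest unused out-arc; one checks this is inverse to the forward map. Counting the input data yields $t_r(G)\cdot(\delta^+(r)-1)!\cdot\prod_{v\ne r}(\delta^+(v)-1)! = t_r(G)\prod_{v\in V}(\delta^+(v)-1)!$ Euler walks starting with $e_0$, and with the reduction above and $t_r(G)=\abs{G}_A$ this is exactly $\abs{G}_E=\prod_{v\in V}(\delta^+(v)-1)!\,\abs{G}_A$.

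The step I expect to be the main obstacle is checking that the greedy walk in the inverse construction uses every arc. First it can only halt at $r$: if it is stuck at $x$ with all out-arcs of $x$ used, then it has left $x$ exactly $\delta^+(x)$ times, so — a walk from $r$ enters a vertex $x\ne r$ once more than it leaves it and enters $r$ equally often — it has entered $x$ at least $\delta^+(x)$ times; but entries into $x$ are at most $\delta^-(x)=\delta^+(x)$, so the extra entry is impossible unless $x=r$. Suppose it halts at $r$ with an unused arc; the unused arcs form a subdigraph $H$ that is balanced at every vertex (the used arcs form a closed walk at $r$) and contains no arc incident to $r$ (all out-arcs of $r$, hence by balance all its in-arcs, are used). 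For any vertex $v$ touched by $H$: since $v$ has an unused out-arc and out-arcs are consumed in order, $v$'s last out-arc $\tau(v)$ — its $T$-arc — is unused, so $\tau(v)\in H$ and its head lies one step nearer $r$ in $T$; that head is again touched by $H$ (balance) unless it equals $r$. Following the finite $T$-path from $v$ toward $r$ we must eventually meet an $H$-arc whose head is $r$, contradicting that $H$ contains no arc incident to $r$. Hence $H$ is empty and the greedy walk is Eulerian; once this bookkeeping is settled the two maps are mutually inverse and the identity follows.
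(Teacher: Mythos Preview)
The paper does not prove this statement: Theorem~\ref{theorem:best} is quoted from the literature (Tutte--Smith and van~Aardenne-Ehrenfest--de~Bruijn) and used as a black box in the proof of Lemma~\ref{lemma:countnles2m}. So there is no ``paper's own proof'' to compare against.

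That said, your argument is the standard last-exit correspondence and is correct. A few remarks. Your reading of $\abs{G}_A$ as the number of spanning arborescences rooted at a fixed vertex, together with the observation that for balanced $G$ this count is root-independent and equals the count of in-arborescences, is exactly what is needed to make the statement as written well posed; the paper leaves this implicit. Your forward map (last exits form an in-arborescence at $r$) and the greedy inverse are set up cleanly, and your verification that the greedy walk is Eulerian --- first that it can only halt at $r$, then that the balanced subdigraph $H$ of unused arcs would have to contain the $T$-arc out of every vertex it touches and hence reach $r$ along the $T$-path, contradicting that $r$ is untouched --- is complete. The one place you wave hands, ``one checks this is inverse to the forward map'', is routine: the greedy walk consumes out-arcs in exactly the recorded order, so it reproduces $W$; and conversely, since the greedy walk is Eulerian and uses out-arcs in the prescribed order, its last exit at each $v\ne r$ is the $T$-arc and its recorded orderings are the given ones.
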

\begin{theorem}[Kirchhoff's matrix tree theorem~\cite{Kirchhoff1847}]\label{theorem:kirchhoff}
In a graph $G=(V,A,\psi)$, the number of spanning trees is equal to
any cofactor of the Laplacian matrix of $G$, which is the diagonal
matrix of degrees minus the adjacency matrix.
\end{theorem}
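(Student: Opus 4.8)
The plan is to prove this classical fact via the signed incidence matrix of $G$ together with the Cauchy--Binet formula. First I would reduce to the connected case: if $G$ is disconnected, then after reindexing vertices the Laplacian $L$ is block diagonal with each block singular (each block has zero row sums), so every cofactor of $L$ vanishes, which matches the fact that a disconnected graph has no spanning trees. So assume $G$ is connected and, without loss of generality, loopless (loops change neither side), with $p=\abs{V}$ vertices and $q$ edges. Orient each edge arbitrarily and let $B$ be the $p\times q$ matrix with $B_{v,e}=1$ if $v$ is the head of $e$, $B_{v,e}=-1$ if $v$ is the tail of $e$, and $B_{v,e}=0$ otherwise; a direct computation gives $BB^{T}=L$. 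Since every column of $B$ sums to $0$, both the row sums and the column sums of $L$ are $0$, and a short linear-algebra argument (using that for connected $G$ the kernel of $L$ is spanned by the all-ones vector) shows all cofactors of $L$ are equal. Hence it suffices to evaluate the principal cofactor obtained by deleting the row and column of one fixed vertex $r$; writing $B_0$ for $B$ with the row of $r$ removed, this cofactor equals $\det(B_0 B_0^{T})$.

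Next I would invoke the Cauchy--Binet formula to write $\det(B_0 B_0^{T})=\sum_{S}\det(B_0[S])^{2}$, the sum ranging over all $(p-1)$-element subsets $S$ of the edge set, where $B_0[S]$ is the square submatrix of $B_0$ on the columns indexed by $S$. The heart of the argument is the claim that $\det(B_0[S])=\pm1$ when the edges of $S$ form a spanning tree of $G$ and $\det(B_0[S])=0$ otherwise. If $S$ contains a cycle, then traversing the cycle produces a nontrivial $\pm1$ combination of the corresponding columns of $B$ (hence of $B_0$) equal to $0$, so the determinant vanishes; the only other possibility for a set of exactly $p-1$ edges is that $S$ is a forest touching all $p$ vertices, i.e.\ a spanning tree. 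For the spanning-tree case I would induct on $p$: a tree on at least two vertices has a leaf $u\neq r$ incident to a unique edge of $S$, so the row of $u$ in $B_0[S]$ has a single nonzero entry $\pm1$; expanding the determinant along that row leaves, up to sign, the analogous reduced incidence determinant of the tree obtained by deleting the leaf $u$ (and its edge), which is $\pm1$ by induction (the base case $p=1$ being the empty determinant $1$). Summing over $S$ then shows $\det(B_0 B_0^{T})$ equals the number of spanning trees $\abs{G}_A$, and since every cofactor of $L$ equals this common value the theorem follows.

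As a fallback I would keep in mind the deletion--contraction proof: letting $\tau(G)$ denote a cofactor of the Laplacian, one checks $\tau(G)=\tau(G-e)+\tau(G/e)$ for every non-loop edge $e$ (by splitting the defining determinant according to whether a certain basis vector is used) and $\tau(G)=\tau(G-e)$ for a loop, which is precisely the recursion counting spanning trees, with an obvious base case. In either approach the main obstacle is the same combinatorial core --- the $\{0,\pm1\}$ dichotomy for $\det(B_0[S])$, equivalently the total unimodularity of a reduced incidence matrix, together with the careful tracking of signs --- while $BB^{T}=L$, the equality of all cofactors of $L$, and the disconnected case are all routine.
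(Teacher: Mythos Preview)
Your argument is the standard and correct proof of the matrix tree theorem via the signed incidence matrix and Cauchy--Binet, including the usual handling of the disconnected case, loops, the equality of all cofactors, and the $\{0,\pm1\}$ dichotomy for the minors of the reduced incidence matrix. There is nothing to compare against, however: the paper does not supply its own proof of Theorem~\ref{theorem:kirchhoff} but merely quotes it as a classical result (with the reference to Kirchhoff) and then applies it inside the proof of Lemma~\ref{lemma:countnles2m}. So your write-up goes well beyond what the paper does here; if anything, note that the paper's usage only needs the cofactor value for the very specific Laplacian $L=a^m I - a^{m-r}J$ on $a^r$ vertices, for which one could alternatively compute the cofactor directly from the eigenvalues without the full machinery you set up.
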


\begin{lemma}\label{lemma:countnles2m}
For $1\leq m\leq n\leq2m$, $\#(m,n)=(a^m!)^{a^{n-m}}$, where
$a=\abs{\Sigma}$.
\end{lemma}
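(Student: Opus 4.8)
The plan is to turn the count into an Euler--tour count via Lemma~\ref{lemma:equivalent}, pin down the structure of the relevant word graph, and evaluate its number of Euler tours by combining the BEST theorem with Kirchhoff's matrix--tree theorem. Put $a=\abs{\Sigma}$ and $k=n-m$, so that $0\le k\le m$; when $k=0$ (i.e.\ $n=m$) the claim is already contained in Lemma~\ref{lemma:countnlesm}, so I may assume $1\le k\le m$. By Lemma~\ref{lemma:equivalent}(2) we have $\#(m,n)=a^n\,\abs{G(m,k)}_E$, so it suffices to show $\abs{G(m,k)}_E=\bigl((a^m-1)!\bigr)^{a^k}a^{m(a^k-1)-k}$ and then simplify.

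First I would describe $G(m,k)=(\Sigma^k,\Sigma^{k+m},\psi)$ explicitly. Because $k\le m$, for an arc $w\in\Sigma^{k+m}$ the blocks $\factor{w}{1}{k}$ and $\factor{w}{m+1}{m+k}$ are disjoint, so $w$ is uniquely determined by its tail $\factor{w}{1}{k}$, its head $\factor{w}{m+1}{m+k}$, and the free middle block $\factor{w}{k+1}{m}$ of length $m-k$. Hence between every ordered pair of vertices (including a vertex and itself) there are exactly $a^{m-k}$ arcs; in particular $\delta^+(v)=\delta^-(v)=a^m$ for every $v$, the digraph is strongly connected, and, after a suitable pairing of opposite arcs, $G(m,k)$ is an undirected graph, so (as noted above for undirected graphs) its spanning arborescences rooted at a fixed vertex are exactly its spanning trees. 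Thus the quantity $\abs{G(m,k)}_A$ entering the BEST theorem --- the number of spanning arborescences at a fixed root, which is root-independent here since $G(m,k)$ is connected and balanced --- equals the number of spanning trees of $G(m,k)$.

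Next I would evaluate $\abs{G(m,k)}_A$ by Theorem~\ref{theorem:kirchhoff}. After discarding the self-loops, which are irrelevant for spanning trees, the Laplacian of $G(m,k)$ is the $a^k\times a^k$ matrix $a^mI-a^{m-k}J$, with $I$ the identity and $J$ the all-ones matrix; a cofactor is the determinant of $a^mI'-a^{m-k}J'$, where $I',J'$ are the identity and all-ones matrices of order $a^k-1$, whose eigenvalues are $a^{m-k}$ (once, on the all-ones vector) and $a^m$ (with multiplicity $a^k-2$). Therefore $\abs{G(m,k)}_A=a^{m-k}(a^m)^{a^k-2}=a^{m(a^k-1)-k}$, and the BEST theorem (Theorem~\ref{theorem:best}) gives $\abs{G(m,k)}_E=\prod_{v\in\Sigma^k}(\delta^+(v)-1)!\cdot\abs{G(m,k)}_A=\bigl((a^m-1)!\bigr)^{a^k}a^{m(a^k-1)-k}$. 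Substituting into $\#(m,n)=a^n\abs{G(m,k)}_E$ and using $n-k=m$, the exponent of $a$ becomes $n+m(a^k-1)-k=ma^k$, whence $\#(m,n)=\bigl((a^m-1)!\bigr)^{a^k}(a^m)^{a^k}=(a^m!)^{a^k}=(a^m!)^{a^{n-m}}$.

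The step I expect to be the main obstacle is not any single computation but getting this middle step exactly right: checking that $G(m,k)$ is a balanced, strongly connected, undirected multigraph that is complete with uniform edge-multiplicity $a^{m-k}$ (with the correct treatment of self-loops), so that the $\abs{G(m,k)}_A$ appearing in the BEST theorem is legitimately computed as a single cofactor of $a^mI-a^{m-k}J$; once that is in place, the remainder is routine bookkeeping with the exponents and the two cited theorems.
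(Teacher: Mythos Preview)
Your proof is correct and follows essentially the same route as the paper: reduce to counting Euler tours in $G(m,n-m)$ via Lemma~\ref{lemma:equivalent}, observe that this graph is a complete digraph on $a^{n-m}$ vertices with uniform arc multiplicity $a^{m-(n-m)}$, compute its spanning-arborescence count by Kirchhoff's theorem, and finish with the BEST theorem. The only cosmetic differences are that you separate out the boundary case $n=m$ and compute the cofactor explicitly via the eigenvalues of $a^mI'-a^{m-k}J'$, whereas the paper simply asserts the value.
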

\begin{proof}
Let $r=n-m$ and $a=\abs{\Sigma}$. Then $0\leq r\leq m$. By
definition, $G=G(m,n-m)=(\Sigma^{r},\Sigma^{m},\psi)$. So from any
vertex to any vertex, there are $a^{m-r}$-many arcs in $G$. We
convert $G$ into a undirected graph $G'$ by omitting all self-loops;
there are $a^{m-r}$-many of them for each vertex. Since for every
pair of vertices $v_1,v_2$ there are $a^{m-r}$-many arcs joins $v_1$
to $v_2$ and correspondingly there are $a^{m-r}$-many arcs joins
$v_2$ to $v_1$, the graph $G'$ is indeed an undirected graph by our
definition. Each vertex in $G'$ is of degree $a^m-a^{m-r}$. Then the
Laplacian matrix of $G'$ is
  \[L=\left(
    \begin{array}{cccc}
      a^m-a^{m-r} & -a^{m-r} & \cdots & -a^{m-r} \\
      -a^{m-r} & a^m-a^{m-r} & \cdots & -a^{m-r} \\
      \vdots & \vdots & \ddots & \vdots \\
      -a^{m-r} & -a^{m-r} & \cdots & a^m-a^{m-r} \\
    \end{array}
  \right).\]
By Theorem~\ref{theorem:kirchhoff}, the number of arborescence
$\abs{G}_A=\abs{G'}_A$ is equal to the cofactor of $L$, which is
$(a^m)^{a^r-2}a^{m-r}=(a^m)^{a^r}/a^n$. Then by
Theorem~\ref{theorem:best}, the number of Euler tours in digraph $G$
is
$\abs{G}_E=((a^m-1)!)^{a^r}\abs{G}_A=((a^m-1)!)^{a^r}(a^m)^{a^r}/a^n=(a^m!)^{a^r}/a^n$.
Finally, by Lemma~\ref{lemma:equivalent}, the number of $m$-shift
de~Bruijn sequence of order $n$ is
$\#(m,n)=a^n\abs{G}_E=(a^m!)^{a^r}$.
\end{proof}

\begin{theorem}\label{theorem:count}
For $1\leq n\leq m$, $\#(m,n)=a^n!a^{(m-n)(a^n-1)}$, and for $1\leq
m\leq n$, $\#(m,n)=(a^m!)^{a^{n-m}}$, where $a=\abs{\Sigma}$.
\end{theorem}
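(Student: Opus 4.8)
The plan is to observe that this theorem only consolidates the case analysis already carried out, so its proof is a short assembly argument invoking the preceding lemmas. First I would dispose of the regime $1\le n\le m$: here the claimed value $a^n!\,a^{(m-n)(a^n-1)}$ is exactly the content of Lemma~\ref{lemma:countnlesm}, which follows from the structural description of the $m$-shift de~Bruijn sequences of order $n$ as precisely the words of the form $u_1\Sigma^{m-n}u_2\Sigma^{m-n}\cdots\Sigma^{m-n}u_{a^n}$ with $u_1,\ldots,u_{a^n}$ a permutation of $\Sigma^n$.

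For the regime $1\le m\le n$ I would split on whether $n\le 2m$. When $m\le n\le 2m$ the claimed value $(a^m!)^{a^{n-m}}$ is exactly Lemma~\ref{lemma:countnles2m}. When $n>2m$, put $r=n\bmod m$; then $m\le m+r\le 2m$, so Lemma~\ref{lemma:countnles2m} applies to $\#(m,m+r)$ and yields $\#(m,m+r)=(a^m!)^{a^{r}}$. Substituting this into the recursion of Lemma~\ref{lemma:recursion}, $\#(m,n)=(a^m!)^{a^{n-m}-a^{r}}\#(m,m+r)$, the exponents telescope, $a^{n-m}-a^{r}+a^{r}=a^{n-m}$, giving $\#(m,n)=(a^m!)^{a^{n-m}}$ as required.

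I expect no real obstacle here: all the substantive work --- the bijection of Lemma~\ref{lemma:equivalent} between multi-shift de~Bruijn sequences and Euler tours of the word graph, the van~Aardenne-Ehrenfest--de~Bruijn arc-graph identity (Theorem~\ref{theorem:ehrenfestbruijn}) underlying the recursion, and the BEST and Kirchhoff computations (Theorems~\ref{theorem:best} and~\ref{theorem:kirchhoff}) underlying the base case Lemma~\ref{lemma:countnles2m} --- has already been invested, and the theorem is their immediate corollary. The one sanity check worth recording is that the two formulas agree on the shared boundary $n=m$, where the first gives $a^n!\,a^{0}=a^n!$ and the second gives $(a^m!)^{a^{0}}=a^m!$; these coincide since $n=m$. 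One could likewise verify that at $n=2m$ the recursion of Lemma~\ref{lemma:recursion} and Lemma~\ref{lemma:countnles2m} yield the same value $(a^m!)^{a^{m}}$, although this is not needed for the proof.
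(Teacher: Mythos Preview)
Your proposal is correct and follows essentially the same route as the paper: cite Lemma~\ref{lemma:countnlesm} for $n\le m$, and for $m\le n$ combine Lemma~\ref{lemma:recursion} with the base case Lemma~\ref{lemma:countnles2m} so that the exponents telescope. Your explicit split into $m\le n\le 2m$ versus $n>2m$ is in fact slightly tidier than the paper's phrasing, since Lemma~\ref{lemma:recursion} is stated only for $n\ge 2m$; the paper implicitly relies on the recursion becoming trivial ($a^{n-m}-a^{r}=0$) when $m\le n<2m$, whereas you invoke Lemma~\ref{lemma:countnles2m} directly there.
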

\begin{proof}
For $1\leq n\leq m$, the equality $\#(m,n)=a^n!a^{(m-n)(a^n-1)}$ is
shown in Lemma~\ref{lemma:countnlesm}. Now we assume $1\leq m\leq
n$. Let $r=n\bmod m$. Then by
Lemmas~\ref{lemma:recursion},\ref{lemma:countnles2m}, we have
$\#(m,n)=(a^m!)^{a^{n-m}-a^r}\#(m,m+r)=(a^m!)^{a^{n-m}-a^r}(a^m!)^{a^r}=(a^m!)^{a^{n-m}}$.
\end{proof}

\section{Generating Multi-Shift de~Bruijn Sequences}
In this section, we study the problem of generating one $m$-shift
de~Bruijn sequence of order $n$ for arbitrary alphabet and positive
integers $m,n$. When $1\leq n\leq m$, a $m$-shift de~Bruijn sequence
of order $n$ is easy to construct as given in
Theorem~\ref{theorem:existence}. Now we consider the case $1\leq
m<n$. We will present two algorithms for generating a $m$-shift
de~Bruijn sequence of order $n$.

We claim that $m$-shift de~Bruijn sequences of order $km$ can be
generated using the ordinary de~Bruijn sequence generating
algorithm, such as described by Fredricksen~\cite{Fredricksen:1982}.
To do this, we first generate a de~Bruijn sequence $w$ of order $k$
over the alphabet $\Gamma=\Sigma^m$. Then we replace each letter of
$w$ in $\Gamma$ by the corresponding word of length $m$ over
$\Sigma$. It is easy to see that the new word is a $m$-shift
de~Bruijn sequence of order $km$.

The first algorithm of generating multi-shift de~Bruijn sequence is
to generate $m_i$-shift de~Bruijn sequences of order $k_im_i$ for
$i=1,2$ before rearranging the words to obtain an arbitrary
$m$-shift de~Bruijn sequence of order $n$. Let $1\leq m<n$ be two
integers, and $n=km+r$, where $r=n\bmod m$. The case $r=0$ is
already discussed and the case $\abs{\Sigma}=1$ is trivial. So we
assume $r\neq0$ and $\abs{\Sigma}\geq2$. We define $m_1=r$,
$n_1=(k+1)r$ and generate $w_1=\tau(m_1,n_1){\tt0}^{m_1}$ such that
$\tau(m_1,n_1)$ is a $m_1$-shift de~Bruijn sequence of order $n_1$
and $\factor{w_1}{1}{n_1}={\tt0}^{n_1}$; and define $m_2=m-r$,
$n_2=k(m-r)$ and generate $w_2=\tau(m_2,n_2){\tt0}^{m_2}$ such that
$\tau(m_2,n_2)$ is a $m_2$-shift de~Bruijn sequence of order $n_2$
and $\factor{w_2}{1}{n_2}={\tt0}^{n_2}$. Let $a=\abs{\Sigma}$,
$N_1=a^{n_1}$, $N_2=a^{n_2}$. We define
$u_i=\factor{w_1}{n_1+(i-1)m_1+1}{n_1+im_1}$,
$u_i'=u_{1+(i\bmod(N_1-1))}$,
$v_i=\factor{w_2}{n_2+(i-1)m_2+1}{n_2+im_2}$, $v_i'=v_{1+(i-1\bmod
N_2)}$. Then the following word
  \[{\tt0}^{n}\,v_1{\tt0}^{m_1}\,v_2\,\cdots\,v_{N_2-1}{\tt0}^{m_1}\,v_{N_2}u'_{(N_1-1)N_2}\,v'_1u'_1\,v'_2u'_2\,\cdots\,v'_{(N_1-1)N_2-1}u'_{(N_1-1)N_2-1}\]
is one $m$-shift de~Bruijn sequence of order $n$, where
$v_{N_2}={\tt0}^{km}$ and $u'_{(N_1-1)N_2}=u_1$.

To show the correctness, we claim that every word in
$L_1=({\tt0}^{m_1}\Sigma^{m_2})^k{\tt0}^{m_1}$ appears in
  \[w'={\tt0}^{n}\,v_1{\tt0}^{m_1}\,v_2\,\cdots\,v_{N_2-1}{\tt0}^{m_1}\]
as a factor at a modulo $m$ position exactly once. Furthermore,
since $\gcd(N_1-1,N_2)=1$, every word in
$L_2=(\Sigma^{m_1}\Sigma^{m_2})^k\Sigma^{m_1}\setminus L_1$ appears
in
 \[w''={\tt0}^{km}u_1v'_1u'_1v'_2u'_2\cdots v'_{(N_1-1)N_2-1}u'_{(N_1-1)N_2-1}\]
as a factor at a modulo $m$ position exactly once. Therefore, the
generated word is indeed a $m$-shift de~Bruijn sequence of order
$n$.

Now, we will see an example. Consider generating a $2$-shift
de~Bruijn sequence of order $5$. Then $m_1=1,n_1=3,m_2=1,n_2=2$ and
we can obtain two words $w_1=\tt00011101000$, which is
$\tau(1,3){\tt0}$, and $w_2=\tt001100$, which is $\tau(1,2){\tt0}$.
So one $2$-shift de~Bruijn sequence of order $5$ is as follows
\begin{multline*}
  \tt000001_201_200_200_2\
  \tt1_11_21_11_21_10_20_10_21_11_20_11_20_10_2
  \tt1_10_21_11_21_11_20_10_21_10_20_11_20_11_2\\
  \tt1_10_21_10_21_11_20_11_21_10_20_10_20_11_2
  \tt1_11_21_10_21_10_20_11_21_11_20_10_20_1,
\end{multline*}
where the subscripts $1$ and $2$ denote whether the letter is from
the word $w_1$ (words $u_i,u_i'$) or from the word $w_2$ (words
$v_i,v_i'$).

Now we present the second algorithm, which uses the same idea of
``prefer one'' algorithm~\cite{Martin1934} for generating ordinary
de~Bruijn sequences. Let $m,n$ be two positive integers. The
following algorithm generates a $m$-shift de~Bruijn sequence of
order $n$:
\begin{enumerate}
  \item Start the sequence $w$ with $n$ zeros;
  \item Append to the end of current sequence $w$ the lexicographically
  largest word of length $m$ such that the suffix of length $n$ of
  new sequence has not yet appeared as factor at a modulo $m$ position;
  \item Repeat the last step until no word can be added.
\end{enumerate}

To show the correctness, first we claim that when the algorithm
stops, the suffix $u$ of length $n-m$ of $w$ contains only zeros. To
see this, suppose $u$ is not ${\tt0}^{n-m}$. Since no word can be
added, all $\abs{\Sigma}^m$ words of length $n$ with prefix $u$
appear in $w$ and thus $u$ appears in $w$ as a factor at a modulo
$m$ position $\abs{\Sigma}^m+1$ times. So there are
$\abs{\Sigma}^m+1$ words of length $n$ with suffix $u$ that appear
in $w$ at a modulo $m$ position, which contradicts the definition of
the multi-shift de~Bruijn sequence. Therefore, $u={\tt0}^{n-m}$.
Furthermore, word ${\tt0}^{n-m}$ appears in $w$ as a factor at a
modulo $m$ position $\abs{\Sigma}^m+1$ times and thus all words in
$\Sigma^m{\tt0}^{n-m}$ appear in $w$ as a factor at a modulo $m$
position. By the algorithm, no word of length $n$ can appear twice
in $w$ at a modulo position. So, in order to prove the correctness
of the algorithm, it remains to show every word of length $n$
appears in $w$ as a factor at a modulo $m$ position. Suppose a word
$v$ does not appear in $w$ at a modulo $m$ position. Then
$\factor{v}{m+1}{n}\neq{\tt0}^{n-m}$ and the word
$\factor{v}{m+1}{n}{\tt0}^m$ does not appear in $w$ as a factor at a
modulo $m$ position as well; otherwise, there are $\abs{\Sigma}^{m}$
appearance of $\factor{v}{m+1}{n}$ in $w$ at a modulo $m$ position,
which means $v$ appears in $w$ as a factor at a modulo $m$ position.
Repeat this procedure, none of the words
$\factor{v}{m+1}{n}{\tt0}^m$, $\factor{v}{2m+1}{n}{\tt0}^{2m}$,
$\ldots$, $\factor{v}{\lfloor n/m\rfloor m+1}{n}{\tt0}^{\lfloor
n/m\rfloor m}$ appears in $w$ as a factor at a modulo $m$ position.
But for $\lfloor n/m\rfloor m\geq n-m$, we proved that
$\factor{v}{\lfloor n/m\rfloor m+1}{n}{\tt0}^{\lfloor n/m\rfloor m}$
appears in $w$ as a factor at a modulo $m$ position, a
contradiction. Therefore, every word of length $n$ appears at a
modulo $m$ position.

Now, we use the algorithm to generate one $2$-shift de~Bruijn
sequence of order $5$. Starting from $\tt00000$, since $\tt00011$
does not appear as a factor at a modulo $2$ position, we append
$\tt11$ to the current sequence $\tt00000$. Repeating this procedure
and appending words $\tt11$, $\tt11$, $\tt10$, $\tt11$, \ldots,
finally we obtain the word:
  \[\tt0000011111 1101110101 1011101100 1110011001 
  \tt0100110001 0000101010 0010000\]

If we circularly move the prefix ${\tt0}^n$ to the end, the sequence
generated by the second algorithm is the lexicographically largest
$m$-shift de~Bruijn sequence of order $n$.

\section{Application in the Frobenius Problem in a Free Monoid}
The study of multi-shift de~Bruijn sequences is inspired by a
problems of words, called the Frobenius problem in a free monoid.
Given $k$ integers $x_1,\ldots,x_k$, such that
$\gcd(x_1,\ldots,x_k)=1$, then there are only finitely many positive
integers that \emph{cannot} be written as a non-negative integer
linear combination of $x_1,\ldots,x_k$. The integer \emph{Frobenius
problem} is to find the largest such integer, which is denoted by
$g(x_1,\ldots,x_k)$. For example, $g(3,5)=7$.

If words $x_1,\ldots,x_k$, instead of integers, are given such that
there are only finitely many words that \emph{cannot} be written as
concatenation of words from the set $\set{x_1,\ldots,x_k}$, the
\emph{Frobenius problem in a free monoid}~\cite{STACS2008} is to
find the longest such words. If all $x_1,\ldots,x_k$ are of length
either $m$ or $n$, $0<m<n$, there is an upper bound: the length of
the longest word that cannot be written as concatenation of words
from the set $\set{x_1,\ldots,x_k}$ is less than or equal to
$g(m,l)=ml-m-l$, where $l=m\Sigma^{n-m}+n-m$.~\cite{STACS2008}
Furthermore, the upper bound is tight and the construction is based
on the multi-shift de~Bruijn sequences. We denote the set of all
words that can be written as concatenation of words in $S$,
including the empty word, by $S^*$.

\begin{theorem}~\cite{STACS2008}
There exists $S\subseteq\Sigma^m\cup\Sigma^n$, $0<m<n$, such that
$\Sigma^*\setminus S^*$ is finite and the longest words in
$\Sigma^*\setminus S^*$ constitute exactly the language
$(\tau\Sigma^m)^{m-2}\tau$, where $\tau$ is a $m$-shift de~Bruijn
sequence of order $n-m$.
\end{theorem}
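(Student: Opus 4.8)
The goal is to exhibit a finite set $S\subseteq\Sigma^m\cup\Sigma^n$ whose generated submonoid $S^*$ omits only finitely many words, and whose longest omitted words are precisely $(\tau\Sigma^m)^{m-2}\tau$ for $\tau$ a $m$-shift de~Bruijn sequence of order $n-m$. The natural choice is to take $S$ to consist of \emph{all} of $\Sigma^m$ together with a carefully chosen subset of $\Sigma^n$; since $\Sigma^m\subseteq S$, a word $w$ lies in $S^*$ as soon as $|w|$ is a non-negative integer combination of $m$ and $n$ \emph{and} $w$ can be parsed using the available length-$n$ blocks at the right offsets. The first step is to reduce membership in $S^*$ to a covering condition phrased in terms of factors at modulo-$m$ positions: a word $w$ with $|w|\equiv 0 \pmod m$ fails to be in $S^*$ exactly when, reading $w$ in blocks of length $m$, every way of grouping a run of consecutive blocks into a length-$n$ super-block forces the use of a length-$n$ word not in $S$. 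The key is then to encode ``words of $\Sigma^n$ not in $S$'' as exactly the length-$n$ factors of $\tau$ occurring at modulo-$m$ positions, using that $\tau$ is a $m$-shift de~Bruijn sequence of order $n-m$ (note $m<n$, and the relevant shift here will be $m$ with order $n-m$, so Proposition~\ref{lemma:circular} and Lemma~\ref{lemma:equivalent} apply).

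Concretely, I would proceed as follows. Let $l = m\abs{\Sigma}^{n-m}+(n-m)$ be the length of a $m$-shift de~Bruijn sequence of order $n-m$, and observe $l = |\tau|$. First I would show $g(m,l)=ml-m-l$ bounds the length: any word longer than this has length expressible as $am+bl$ with $a,b\ge 0$, and I would prove by a direct parsing argument that every sufficiently long word lies in $S^*$ once $S$ contains all of $\Sigma^m$ and enough length-$l$\dots{}—wait, rather length-$n$—blocks; the cleaner route is to pick $S=\Sigma^m\cup T$ where $T=\Sigma^n\setminus F$ and $F$ is the set of length-$n$ factors of $\tau$ at modulo-$m$ positions, then argue that a word $w$ with $|w|$ a multiple of $m$ fails to lie in $S^*$ iff when we slide a window of length $n$ along $w$ at every modulo-$m$ position, each window is a word of $F$. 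Since $\tau$ is a $m$-shift de~Bruijn sequence of order $n-m$, the number of such ``bad'' windows available is exactly $\abs{\Sigma}^{n-m}$, occurring in the fixed cyclic pattern dictated by $\tau$; Proposition~\ref{lemma:circular} (with shift $m$, order $n-m$, so the prefix and suffix of length $n-2m$ agree) guarantees these windows chain together consistently, so the maximal bad words are obtained by starting anywhere in the cycle $\tau$ and reading until we are forced out.

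The counting of the longest bad words then becomes the combinatorial heart. A bad word $w$ is built by concatenating length-$m$ blocks so that every length-$n$ modulo-$m$ window is a factor of $\tau$ at a modulo-$m$ position; by Lemma~\ref{lemma:equivalent} such windows correspond to arcs along a walk in $G(m,n-2m)$ (equivalently vertices of $G(m,n-m)$), and the longest such walk before repetition forces closure is one full traversal minus the overlap. Writing $w$ in the form $w = \sigma_0 x_1 \sigma_1 x_2 \cdots$, where the $\sigma_i$ track the length-$n$ windows and the $x_i\in\Sigma^m$ are the intervening blocks, a careful bookkeeping shows the maximal length is $(m-2)(l+m)+l$, i.e. $ml-m-l$ as required, and that $w$ has exactly the shape $(\tau\Sigma^m)^{m-2}\tau$: the $m-2$ copies of $\tau$ are the successive ``full passes'' and the interleaved $\Sigma^m$ factors are the free choices of the connecting block. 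The forward direction (every such word is bad and maximal) is a direct verification; the reverse (every maximal bad word has this form) is where I expect the main obstacle, since it requires showing that a bad word cannot be extended \emph{unless} it already contains a full cyclic pass of $\tau$, and that it cannot contain $m-1$ disjoint passes—this is a pigeonhole/period argument on how many times each length-$(n-m)$ word can reappear at a modulo-$m$ position across the whole of $w$, using precisely that $\tau$ realizes each such word exactly once per pass. The alphabet-size-one and the $\abs{\Sigma}\ge 2$ cases should be separated, as in the generation section, and the degenerate small-$m$ subcases ($m=1$, $m=2$) checked by hand.
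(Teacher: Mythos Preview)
The paper does not prove this theorem: it is quoted from \cite{STACS2008} and followed only by a worked example (the case $m=3$, $n=7$ over a binary alphabet). There is therefore no proof in the paper to compare your proposal against.

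That said, your construction $S=\Sigma^m\cup(\Sigma^n\setminus F)$, with $F$ the set of length-$n$ factors of $\tau$ at modulo-$m$ positions, does match the paper's example (where exactly the fifteen length-$7$ factors $\factor{\tau}{3i+1}{3i+7}$ are removed). However, your proposal contains concrete errors that would need to be fixed before it could become a proof. First, you write ``let $l=m\abs{\Sigma}^{n-m}+(n-m)$ be the length of a $m$-shift de~Bruijn sequence of order $n-m$''; by the length formula in the paper this is off by $m$, since an $m$-shift de~Bruijn sequence of order $n-m$ has length $m\abs{\Sigma}^{n-m}+(n-2m)$. The quantity $l$ in the paper is $\abs{\tau}+m$, not $\abs{\tau}$. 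Second, your claimed identity $(m-2)(l+m)+l=ml-m-l$ is false except when $m=1$; with the correct value $\abs{\tau}=l-m$ one gets $(m-2)(\abs{\tau}+m)+\abs{\tau}=(m-1)(l-m)+(m-2)m=ml-m-l$, which is the intended computation. Third, your appeal to Proposition~\ref{lemma:circular} requires $n-m>m$, so the range $m<n\le 2m$ is not covered by your sketch; the proposition does not apply there and a separate (degenerate) argument is needed. Finally, the ``reverse direction'' you flag as the main obstacle is indeed nontrivial, and nothing in your outline explains why a maximal bad word must globally align with copies of $\tau$ rather than with some other cyclic rotation or concatenation pattern; this is where the de~Bruijn property of $\tau$ (each length-$(n-m)$ word occurs exactly once at a modulo-$m$ position) has to be used carefully, and your sketch does not yet do so.
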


For example, for any set of words $S\subseteq
U=\set{{\tt0,\tt1}}^3\cup\set{{\tt0,\tt1}}^7$ such that
$\set{{\tt0,\tt1}}^*\setminus S^*$ is finite, the longest words in
$\set{{\tt0,\tt1}}^*\setminus S^*$ are of length less than or equal
to $g(3,3\cdot2^4+4)=g(3,52)=101$. To construct $S$ to reach the
upper bound, we first choose an anbitrary $3$-shift de~Bruijn
sequence of order $4$ as $\tau=\tt000 011 111 111 011 010 110 110
010 001 101 101 001 000 1001000$. Then based on $\tau$, we construct
the set $S=U\setminus\{$ $\tt0000111$, $\tt0111111$, $\tt1111110$,
$\tt1110110$, $\tt0110101$, $\tt0101101$, $\tt1101100$,
$\tt1100100$, $\tt0100011$, $\tt0011011$, $\tt1011010$,
$\tt1010010$, $\tt0010001$, $\tt0001001$, $\tt1001000$ $\}$. We have
$L=\set{{\tt0,\tt1}}^*\setminus S^*=\tau\set{{\tt0,\tt1}}^2\tau$ and
one of the longest words in $L$ of length exactly $101$ is given
below:
\begin{multline*}
\tt 0000111111 1101101011 0110010001 1011010010
0010010001\\
\tt 110000111111110110101101100100 011011010010001001000.
\end{multline*}


\section{Conclusion}
In this paper, we generalized the classic de~Bruijn sequence to a
new multi-shift setting. A word $w$ is a $m$-shift de~Bruijn
sequence $\tau(m,n)$ of order $n$, if each word of length $n$
appears exactly once as a factor at a modulo $m$ position. An
ordinary de~Bruijn sequence is a $1$-shift de~Bruijn sequence.

We showed the total number of distinct $m$-shift de~Bruijn sequences
of order $n$ is $\#(m,n)=(a^n)!a^{(m-n)(a^n-1)}$ for $1\leq n\leq m$
and is $\#(m,n)=(a^m!)^{a^{n-m}}$ for $1\leq m\leq n$, where
$a=\abs{\Sigma}$. This result generalizes the formula
$(a!)^{a^{n-1}}$ for the number of ordinary de~Bruijn
sequences~\cite{vanAardenne-Ehrenfest&deBruijn:1951}. Here we use an
ordinary word form; if counting the sequences in a circular form,
then the number is to be divided by $a^n$.

We provided two algorithms for generating a $m$-shift de~Bruijn
sequence of order $n$. The first algorithm is to rearrange factors
from two simpler multi-shift de~Bruijn sequences, where the order is
a multiple of the shift. The second is the analogue of the ``prefer
one'' algorithm (for example, see~\cite{{Fredricksen:1982}}) for
generating ordinary de~Bruijn sequence.

The multi-shift de~Bruijn sequence has application in the Frobenius
problem in a free monoid by providing constructions of examples. It
will be interesting to see that this generalized concept of the
de~Bruijn sequence can help in other fields of theoretical computer
science and discrete mathematics.

\ack{The author would like to thank Prof. Jeffrey Shallit for
valuable discussion.}


\end{document}